\pdfoutput=1
\documentclass[conference]{IEEEtran}

\ifCLASSINFOpdf

\else

\fi

\hyphenation{op-tical net-works semi-conduc-tor}
\usepackage{amsmath}
\usepackage{amsthm}

\newtheorem{lemma}{Lemma}
\newtheorem{remark}{Remark}

\newtheorem{auxiliary code}{Auxiliary Code}

\newtheorem{definition}{Definition}

\newtheorem{strategy}{Sampling Strategy}
\usepackage{algorithm}
\usepackage[noend]{algpseudocode}

\algrenewcommand\algorithmicindent{0.6em}%

\usepackage{tikz}
\usepackage{graphicx}
\usetikzlibrary{positioning}
\usepackage[font={small}]{caption}
\usepackage{subcaption}

\def \tS {\mathcal{S}}
\def \tC {\mathcal{C}}
\def \td {\zeta}
\def \tH {\mathcal{H}}
\def \tQ {Q}
\def \ttc {\lambda}
\def \tcn {c}
\def \ttg {g_c}
\def \tGEC {\mathcal{\widetilde{G}}}
\def \teta {\mu}
\def \tK {\mathcal{K}}
\def \talpha {\alpha_{\ttg}}
\def \tV {\mathcal{V}}

\usepackage{multicol}

\newcommand\lev[1]{{\color{black}#1}}
\newcommand\deb[1]{{\color{black}#1}}

\newcommand\redtext[1]{{\color{black}#1}}
\newcommand\bluetext[1]{{\color{black}#1}}
\newcommand\greentext[1]{{\color{black}#1}}

\begin{document}

\title{ Concentrated Stopping Set Design for \\
Coded Merkle Tree: Improving Security Against Data Availability Attacks in Blockchain Systems}

\author{\IEEEauthorblockN{Debarnab Mitra, Lev Tauz and Lara Dolecek}
\IEEEauthorblockA{Department of Electrical and Computer Engineering, University of California, Los Angeles, USA\\
email: debarnabucla@ucla.edu, levtauz@ucla.edu, dolecek@ee.ucla.edu}
\vspace{-0.9cm}}

\maketitle

\begin{abstract}
In certain blockchain systems, light nodes are clients that download only a small portion of the block. Light nodes are vulnerable to \emph{data availability} (DA) attacks where a malicious node hides an invalid portion of the block from the light nodes. \lev{Recently, a technique based on erasure codes called Coded Merkle Tree (CMT) was \deb{proposed by Yu \textit{et al.}} that enables light nodes to detect a DA attack with high probability. The CMT \deb{is constructed using} LDPC codes for fast decoding but can fail to detect a DA attack if a malicious node hides a small stopping set of the code.} To combat this, \deb{Yu \textit{et al.}} used well-studied techniques to design random LDPC codes with high minimum stopping set size. Although effective, these codes are not necessarily optimal for this application.  In  this paper, we demonstrate a more specialized LDPC code design to improve the security against DA attacks. We achieve this goal by providing a deterministic LDPC code construction that focuses on concentrating stopping sets to a small group of variable nodes rather than only eliminating stopping sets. We design these codes by modifying the Progressive Edge Growth algorithm into a technique called the \emph{entropy-constrained} PEG (EC-PEG) algorithm. This new method demonstrates a higher probability of detecting DA attacks and allows for good codes at short lengths.
\end{abstract}

\IEEEpeerreviewmaketitle

   \vspace{-10pt}
\section{Introduction}
   \vspace{-2pt}
Blockchain systems typically have two types of nodes: \textit{i) Full nodes -} they store the full blockchain ledger and can validate transactions by operating on the entire ledger, \textit{ii) Light nodes -} they do not store the entire ledger and hence cannot validate transactions. 
Light nodes are implemented using the Simplified Payment Verification (SPV) technique \cite{Bitcoin}: For each block, a Merkle tree is constructed using the transactions in the block as leaf nodes, and the Merkle root is stored at the light nodes.
Using the Merkle root, light nodes can verify the inclusion of transactions in a block via a Merkle proof \cite{Bitcoin} but not its correctness.

Networks in which light nodes are connected to a majority of malicious full nodes are susceptible to \emph{data availability (DA)} attacks
\cite{CMT}, \cite{dataAvailOrg} where a malicious full node generates a block with invalid transactions, publishes a Merkle root that satisfies the Merkle proof (thus allowing light nodes to successfully verify the inclusion of the transactions in the block), and hides the invalid portion of the block so that honest full nodes are unable to validate the block and notify the light nodes of the malicious behaviour. Light nodes can independently detect a DA attack if an anonymous request for a portion of the block is rejected by the full node that generates the block.  As such, light nodes can randomly sample the block, i.e., randomly request for different portions of the block transactions, to detect a DA attack. \deb{The detection, however,} becomes increasingly unlikely as the block size increases since a malicious node can hide a very small section. To alleviate this problem, authors in \cite{dataAvailOrg} \lev{introduced the method of coding the block using erasure codes such that it forces the adversary to hide a larger portion of the block which, then, can be detected with a high probability by the light nodes using random sampling}. Work in \cite{CMT} extends the idea into a technique called Coded Merkle Tree (CMT) that uses Low-Density Parity-Check (LDPC) codes to encode each layer of the Merkle tree. The advantage of an LDPC code is that it enables the use of a polynomial time peeling decoder at the full nodes to decode the coded symbols. 

A stopping set of an LDPC code is a set of variable nodes (VNs) such that every check node (CN) connected to this set is connected to it at least twice. The peeling decoder fails to fully decode a layer of the CMT if the malicious node hides coded symbols corresponding to a stopping set of the LDPC code used in the layer. 
Since the malicious node can hide the smallest stopping set of the LDPC code,
the probability of failure (at a particular layer) for the light nodes to detect a DA attack using random sampling depends on the stopping ratio (size of the smallest stopping set divided by the codeword length \lev{\cite{CMT}}) of the LDPC code. 

The CMT can be constructed using any LDPC code. In \cite{CMT}, authors used codes from a well studied random LDPC ensemble that guarantees a certain stopping ratio with high probability. Despite the high stopping set ratio, these codes were originally designed for other types of channels (i.e., BSC) and are not necessarily optimal for this specific application. In this paper, we provide a more specialized LDPC code design to \lev{significantly lower} the probability of light node failure. We accomplish this \lev{goal} by providing a deterministic LDPC code construction that focuses on concentrating stopping sets to a small set of VNs based on the Progressive Edge Growth algorithm \cite{PEG} which we term as the \emph{entropy-constrained} PEG (EC-PEG) algorithm. Intuitively, a greedy sampling strategy can now be used by the light nodes where they sample this small set of VNs to guarantee that a large number of stopping sets are being sampled. We shall demonstrate that this novel construction provides a much lower probability of light nodes failing to detect a DA attack than for an LDPC code chosen from a random ensemble. Additionally, using a deterministic algorithm, we can design LDPC codes for small to moderate blocklengths where  random ensembles \lev{typically} have trouble providing good codes. 
 The rest of this paper is organized as follows: In Section \ref{sec:system_model}, we provide  preliminaries and describe the system model. In Section \ref{sec:idea}, we provide motivation for our design method. The description of the EC-PEG algorithm is provided in Section \ref{sec:EC_PEG}. Finally, we present our simulation results in Section \ref{sec:sims} and concluding remarks in Section \ref{sec:conclusion}. 

\section{Preliminaries and System Model}\label{sec:system_model}

Similar to the Merkle tree introduced in \cite{Bitcoin}, a CMT is constructed by applying an LDPC code to each layer of the Merkle tree before hashing the layer to generate its parent layer. While we focus on the relevant issues of the CMT, specific details regarding the CMT can  be found in \cite{CMT}. 
 We consider a blockchain network with full nodes and light nodes,
 where only full nodes can produce new blocks. Light nodes do not store the entire block but store the root of \lev{the} CMT and use it to verify block inclusions via Merkle proofs \cite{CMT}. We also assume that each light node is honest and is connected to at least one honest full node. However, it may be connected to a majority of malicious full nodes. 
 
 In this paper, we consider one layer of the CMT and assume that it has $n$ coded symbols generated using an \deb{LDPC code with a binary parity check matrix $H$ (having $n$ columns)}. Techniques developed for this layer can be applied to all layers. 
 \lev{A DA attack occurs when a malicious node 1) produces a block that satisfies the Merkle proof for the light nodes to accept it as a valid block, and 2) hides a small portion of the coded symbols, corresponding to a stopping set of $H$, such that honest full nodes are not able to decode the block fully using a hash aware peeling decoder \cite{CMT}}. The goal of the CMT is to prevent light nodes from accepting a block whose CMT layer cannot be decoded by a full node. 
 To accomplish this goal, light nodes anonymously request coded symbols from the block producer according to some sampling strategy and accept the block if all the requested symbols are returned.
 The malicious producer only returns coded symbols that are not hidden from the system. \deb{Thus, a light node fails to detect a DA attack if the samples requested are not hidden.}

In this paper, we are concerned with the probability of a light node failing to detect a DA attack.
Let $p_f(s,\teta)$ be the probability of failure for a 
given $H$
and some sampling strategy when the light node uses $s$ samples and the malicious node hides a stopping set with $\teta$ VNs. 
In the CMT in \cite{CMT}, random sampling with replacement is employed and $p_f(s,\teta) = (1 - \frac{\teta}{n})^s$. For a given $s$, $p_f(s,\teta)$ is maximized if the malicious node hides the smallest stopping set of \deb{$H$}.
In \cite{CMT}, \lev{the LDPC code} is chosen from a random code ensemble that, with high probability, has a stopping ratio $\beta^*$. Thus, the probability of failure becomes $(1 - \beta^*)^s$, w.h.p..
 The system is less secure when the stopping ratio of \lev{$H$} is less than $\beta^*$, which happens with a large probability at short block lengths. To improve security, 
 we seek to provide a deterministic LDPC code construction and an associated sampling strategy that still results in a low probability of failure $p_f(s,\teta)$ for small $\teta$. 
 For the rest of the paper, we assume that the malicious node is unaware of the sampling strategy used by the light nodes and for a given size of the stopping set $\teta$, it hides a randomly chosen stopping set of size $\teta$.

  \vspace{-0.1cm}
 \begin{remark}\label{remark:comparisons}
 \redtext{In this work, compared to \cite{CMT}, we focus on a new design of the constituent LDPC code used for the construction of the CMT. As a result, performance metrics and their order-optimal solutions provided in \cite{CMT} are not compromised. In particular: i) the header remains as the CMT root and its size does not grow with the blocklength; ii) the decoding complexity of the hash-aware peeling decoder is still linear in the blocklength; iii) the incorrect coding proof size for our codes is empirically shown to be similar to codes from \cite{CMT}.}
 \end{remark}
 
 \vspace{-0.1cm}
 For a parity check matrix $H$ with $n$ columns $\tV = \{v_1, v_2,\ldots, v_{n}\}$, let $\mathcal{G}$ denote the  Tanner graph (TG) representation  of $H$.
 We  \lev{interchangeably} refer to $v_i$ as the $i^{th}$ VN in $\mathcal{G}$ and rows of $H$ as CNs in $\mathcal{G}$.  For a set $\mathcal{S}$, let $|\mathcal{S}|$ denote its cardinality.
\lev{Let $g_{\min}$ denote the girth of $\mathcal{G}$}.
 $H_{\tQ}$ and $\overline{H}_{\tQ}$, $\tQ \subseteq \tV$, represent submatrices of $H$ with columns $\{v_i | v_i \in \tQ\}$ and $\{v_i | v_i \in \tV\setminus \tQ\}$, respectively, and all non-zero rows. We say that $H_{\tQ}$ has a cycle of length $\ell$ (\bluetext{called an $\ell$-cycle}) iff the induced subgraph of $\mathcal{G}$ by the VNs \deb{in} $\tQ$ has an \bluetext{$\ell$-cycle}.  For a cycle (stopping set) in the $\mathcal{G}$, we say that a VN $v$ \emph{touches} the cycle (stopping set) iff $v$ is part of the cycle (stopping set). \lev{The set} $\tS  = \{v_{i_1}, v_{i_2}, \ldots, v_{i_s}\}
 \subseteq \tV$ is a sample set if light nodes sample the VNs $\{v_{i_1}, v_{i_2}, \ldots, v_{i_s}\}$.
 We define the weight of a stopping set as the no. of VNs touching it. A stopping set is hidden by a malicious node if all VNs present in it are hidden. 
 For any stopping set of weight $\teta$ hidden by the malicious node, let $P_f(\tS , \teta)$ be the failure probability of light nodes with the sample set $\tS$. For $\mathrm{p} = (p_1, p_2 \ldots, p_t)$ such that $p_i \geq 0$,   $\sum_{i=1}^{t}p_i = 1$, we define the entropy function $\mathcal{H}(\mathrm{p}) = \sum_{i=1}^{t}p_i\log(p_i)$ (assume that $0\log 0 = 0$).

\section{Design Idea: Stopping set Concentration}\label{sec:idea}

In this section, we motivate our novel design idea of concentrating stopping sets to improve light node detection of DA attacks. 
It was shown  in \cite{WesselSS} that for LDPC codes with no degree one VNs, stopping sets are made up of cycles. Since working with stopping sets directly is generally computationally difficult, we focus on concentrating cycles. It is known that codes with irregular VN degree distributions are prone to small stopping sets. Thus, we consider VN degree regular LDPC codes of VN degree $d_v \geq 3$. We call such codes \emph{$d_v$-reg} LDPC codes. 
  
\vspace{-0.1cm}
 \begin{definition}
 \label{remark:sigmabound}
 Let $\Sigma(d_v,g)$ denote the minimum weight possible for a stopping set that only has cycles of length $\geq g$ and all VNs have degree $d_v$. Thus, in a $d_v$-reg LDPC code, all stopping sets of weight $< \Sigma(d_v,g)$ have at least one cycle of length $< g$.
 \end{definition}

\vspace{-0.1cm}
 Closed form expressions for $\Sigma(d_v,6)$, $\Sigma(d_v,8)$ and a lower bound for $\Sigma(d_v,g)$, $g\geq 10$, \lev{are} provided in \cite{sigmadg,baniSS}, e.g., $\Sigma(d_v,10) \geq 1 + d^2_v$. 
 Next, we consider the following definition of a sample set followed by an immediate lemma:
 
\vspace{-0.1cm}
\begin{definition}\label{defn:sampleset}
A $g$-sample set $\tS_g$ is a sample set such that $\overline{H}_{\tS_g}$ has no cycles of length  $< g$.
\end{definition}
 
\vspace{-0.15cm}
 \begin{lemma}\label{lemma:zeroprob}
 For a $d_v$-reg LDPC code, if light nodes sample a $g$-sample set $\tS_g$, then for all stopping sets of weight $\teta \leq \Sigma(d_v,g)-1$, $P_f(\tS_g,\teta) = 0.$
 \end{lemma}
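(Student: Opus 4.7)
The plan is to prove the contrapositive in spirit: show that every stopping set of weight $\teta \leq \Sigma(d_v,g)-1$ must intersect the sample set $\tS_g$, so the malicious node cannot possibly hide such a stopping set without exposing at least one of its VNs to the light node's sampling. Once this is established, $P_f(\tS_g,\teta)=0$ follows immediately from the definition of failure: a light node fails only if none of its requested samples are among the hidden VNs.

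I would proceed by contradiction. Suppose there is a stopping set $\tC$ of weight $\teta \leq \Sigma(d_v,g)-1$ with $\tC \subseteq \tV \setminus \tS_g$. Since the code is $d_v$-reg with $d_v \geq 3$ (so it has no degree-one VNs), the result of \cite{WesselSS} cited in Section \ref{sec:idea} guarantees that $\tC$ is made up of cycles, i.e., the induced subgraph of $\mathcal{G}$ on $\tC$ contains at least one cycle. By Definition \ref{remark:sigmabound}, because $|\tC| = \teta < \Sigma(d_v,g)$, this induced subgraph cannot consist solely of cycles of length $\geq g$; hence it contains at least one cycle of length $< g$.

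Now comes the key structural observation that drives the contradiction: any cycle living entirely on the VNs of $\tC$ is a cycle among VNs of $\tV \setminus \tS_g$, and hence, by the notational convention set up before Lemma \ref{lemma:zeroprob}, it is a cycle of $\overline{H}_{\tS_g}$. This produces a cycle of length $< g$ in $\overline{H}_{\tS_g}$, directly violating the defining property of a $g$-sample set in Definition \ref{defn:sampleset}. The contradiction forces $\tC \cap \tS_g \neq \emptyset$ for every stopping set of weight $\leq \Sigma(d_v,g)-1$, and therefore $P_f(\tS_g,\teta)=0$.

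The only place that requires a bit of care is step two, namely the transfer of the "cycle made of VNs in $\tC$" statement into a statement about $\overline{H}_{\tS_g}$; this is essentially bookkeeping in the Tanner graph and uses the fact that the induced subgraph on $\tC$ is a subgraph of the induced subgraph on $\tV \setminus \tS_g$ (because $\tC \subseteq \tV \setminus \tS_g$). I do not expect any serious obstacle beyond this, since the heavy lifting, namely lower-bounding stopping-set weight in terms of girth, is already encapsulated in $\Sigma(d_v,g)$.
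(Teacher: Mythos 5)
Your proof is correct and is essentially the argument the paper intends: the paper's own proof is the one-line statement that the result follows from Definitions~\ref{remark:sigmabound} and \ref{defn:sampleset}, and you have simply filled in the details (any stopping set of weight below $\Sigma(d_v,g)$ contains a cycle of length $<g$, which cannot lie entirely in $\tV\setminus\tS_g$ by the definition of a $g$-sample set, so every such stopping set is touched by $\tS_g$).
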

 \begin{proof}  \lev{The result follows from Definitions \ref{remark:sigmabound} and \ref{defn:sampleset}}.
 \end{proof}
 
\vspace{-0.15cm}
From the above lemma, if the sample set of the light node includes a $g$-sample set, it  fails only if the malicious node hides a stopping set of weight $\geq \Sigma(d_v,g)$ that can be caught using further random sampling.
Clearly, we wish to use the smallest $g$-sample set to improve efficiency, but \lev{this \deb{smallest} set is} computationally hard to determine. Instead, we use a greedy method to find a $g$-sample set $\tS^a_g$ of small \lev{enough} size that is shown to perform well.

Let $\tC_g$ denote a set that touches all \bluetext{$g$-cycles} obtained using the following greedy method. For $g=g_{\min}$, initialize $\tC_{g_{\min}} = \emptyset$, otherwise $\tC_{g} = \cup_{g_{\min} \leq g' < g}\; \tC_{g'}$. Purge all the VNs in $\tC_g$ from $\mathcal{G}$. \greentext{Then, select a VN $v$ at random from the VNs that touch the maximum number of \bluetext{$g$-cycles} in $\mathcal{G}$. Add $v$ to $\tC_g$, purge $v$ from $\mathcal{G}$, and repeat the process until no \bluetext{$g$-cycles} are left in $\mathcal{G}$. Our greedy $g$-sample set is  $\tS^a_g = \tC_{g-2}.$} \bluetext{A formal algorithm for the above can be found in Appendix \ref{appendix:alg:g-sample}.}

Our goal is to design \deb{parity check matrices $H$}  that have small $|\tS^a_g|$. \bluetext{Let $\td^g = (\td^g_1, \td^g_2, \ldots, \td^g_n)$ be the VN-to-$g$-cycle distribution} where $\td^g_i$ is the fraction of \bluetext{$g$-cycles} touched by $v_i$. Due to the construction of $\tS^a_g$, $|\tS^a_g|$ is small if a majority of \bluetext{$g'$-cycles, $g' < g$,} are touched by the same small subset of VNs. This goal is achieved if for all \bluetext{$g'$-cycles}, $g' < g$, distributions $\td^{g'}$ are concentrated (i.e., have a high \bluetext{$g'$-cycle} fraction) towards the same set of VNs.
Later, we design the EC-PEG algorithm that \greentext{achieves concentrated distributions}.

\vspace{-0.05cm}
\begin{definition}\label{remark:skewingSS}
Let $ss^{\teta} = (ss^{\teta}_1, ss^{\teta}_2, \ldots, ss^{\teta}_n)$ be the \lev{VN-to-stopping set} of weight $\teta$ distribution where $ss^{\teta}_i$ is the fraction of stopping sets of weight  $\teta$ touched by $v_i$. Due to Definition \ref{remark:sigmabound}, by concentrating the distributions $\td^{g'}$ for $g' < \ttg$, for some cycle length $\ttg$, we also concentrate $ss^{\teta}$, where $\teta \leq \Sigma(d_v,\ttg)-1$.
\end{definition}
\vspace{-0.1cm}
 The next lemma demonstrates that LDPC codes with concentrated $ss^{\teta}$ results in a smaller probability of light node failure when a malicious node randomly hides a stopping set of size $\teta$ and the light nodes use a sample set of size $s$.

\vspace{-0.05cm}
\begin{lemma}\label{lemma:average_pf}
Let $\mathcal{SS}_{\teta}$ denote the set of all weight $\teta$ stopping sets in the LDPC code. If the malicious node picks a stopping set randomly from  $\mathcal{SS}_{\teta}$ and hides it, the probability of failure at the light node $p_f(s,\teta)$ using $s$ samples and any sampling strategy satisfies $p_f(s,\teta) \geq 1 - \max_{\tS \subseteq \tV, |\tS| = s}\tau(\tS,\teta)$,
where $\tau(\tS,\teta)$ is the fraction of stopping sets of weight $\teta$ touched by the sample set $\tS$. Let $\tS^{opt}_{\teta} =  \mathrm{argmax}_{\tS \subseteq \tV, |\tS| = s}\tau(\tS,\teta)$. The lower bound in the above equation is achieved when light nodes sample, with probability one, the set $\tS^{opt}_{\teta}$.\vspace{-0.05cm}
\end{lemma}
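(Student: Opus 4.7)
The plan is to compute $p_f(s,\teta)$ directly as a double expectation over the independent random choices of the malicious node (which stopping set in $\mathcal{SS}_\teta$ to hide) and the light node (which sample set of size $s$ to query), and then bound this by pulling the max of $\tau(\tS,\teta)$ outside the average. The starting observation is that, by the definition of $P_f$ and the notion of a sample set touching a stopping set, for any fixed sample set $\tS$ and any fixed hidden stopping set $T \in \mathcal{SS}_\teta$, the light node fails to detect the DA attack iff $\tS \cap T = \emptyset$, i.e., iff $\tS$ does not touch $T$.

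Next, I would fix an arbitrary sampling strategy, modelled as a probability distribution $q(\cdot)$ supported on size-$s$ subsets of $\tV$. Averaging over both the malicious node's uniform choice of $T \in \mathcal{SS}_\teta$ and the light node's choice of $\tS \sim q$, I can condition on $\tS$ first:
\begin{equation*}
p_f(s,\teta) \;=\; \sum_{\tS \subseteq \tV,\, |\tS|=s} q(\tS)\, \Pr_{T \sim \mathrm{Unif}(\mathcal{SS}_\teta)}[\tS \cap T = \emptyset].
\end{equation*}
By definition, the inner probability is exactly $1 - \tau(\tS,\teta)$, since $\tau(\tS,\teta)$ is the fraction of weight-$\teta$ stopping sets touched by $\tS$. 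Substituting and using $\sum_\tS q(\tS)=1$ yields
\begin{equation*}
p_f(s,\teta) \;=\; 1 \;-\; \sum_{\tS} q(\tS)\,\tau(\tS,\teta).
\end{equation*}

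Finally, since the second term is a convex combination of the values $\tau(\tS,\teta)$, it is upper bounded by $\max_{\tS \subseteq \tV,\,|\tS|=s}\tau(\tS,\teta)$, which delivers the claimed lower bound on $p_f(s,\teta)$. For achievability, taking $q$ to be the point mass on $\tS^{opt}_{\teta}$ makes the convex combination equal to $\tau(\tS^{opt}_\teta,\teta) = \max_{\tS}\tau(\tS,\teta)$, so the bound is attained with equality.

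No step here is technically difficult; the entire content is in interpreting the failure event correctly and applying the trivial inequality ``average $\le$ max.'' The only mild subtlety I anticipate is being careful that the sampling strategy is allowed to be randomized (so the inequality is not vacuous), and making sure the reader sees that $q$ may a priori depend on $\teta$ or use any side information short of knowing the malicious node's choice; once the strategy is formalized as a probability distribution over sample sets of size $s$, the argument above applies verbatim.
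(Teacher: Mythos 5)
Your proof is correct and is essentially the same argument as the paper's: the paper formalizes the sampling strategy as a probability vector $\mathrm{x}$ over size-$s$ subsets, writes the failure probability as $\frac{1}{|\mathcal{SS}_{\teta}|}\|L\mathrm{x}\|_1$, and observes that this linear objective over the simplex is minimized at a vertex, which is exactly your ``convex combination $\le$ max'' step followed by the point-mass achievability argument. No meaningful difference in decomposition or key ideas.
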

\begin{proof}
Proof can be found in Appendix \ref{appendix:lemma:average_pf}.
\end{proof}

\vspace{-0.15cm}
According to the above lemma, for a fixed sampling size of $s$, the lowest probability of failure is achieved when the light nodes sample the set $\tS^{opt}_{\teta}$ and hence, the optimal probability of failure becomes $1 - \tau(\tS^{opt}_{\teta},\teta)$. Thus, from Definition \ref{remark:skewingSS}, by designing LDPC codes that have  concentrated $ss^{\teta}$, we increase the value of $\tau(\tS^{opt}_{\teta},\teta)$ and reduce the probability of failure. 

\vspace{-0.05cm}
\begin{strategy}\label{remark:samplingstrategy}
\lev{We are unaware of an efficient method to find $\tS^{opt}_{\teta}$}, thus we construct a sample set using greedy techniques.  Since we aim to concentrate the distributions  $\td^{g'}$ for $g_{\min} \leq g' < \ttg$ towards the same VNs, we form a greedy sample set $\tS_{greedy}$ by greedily choosing $s$ samples from $\tS^a_{\ttg}$ (assuming $s \leq |\tS^a_{\ttg}|$). \lev{Assume $C_{\tilde{g}} = \emptyset$ for $\tilde{g} < g_{\min}$. Find} the smallest $\tilde{g}$ that satisfies $|\tC_{\tilde{g}}| > s$. Then, select $s-|\tC_{\tilde{g}-2}|$ samples from $\tC_{\tilde{g}} \setminus \tC_{\tilde{g}-2}$ that touch the most \bluetext{$\tilde{g}$-cycles}, which we define as $\widetilde{\tC}$. As such, $\tS_{greedy} = \tC_{\tilde{g}-2} \cup \widetilde{\tC}$. Our sampling strategy is sampling with probability one the set $\tS_{greedy}$. Thus, the same $s$ samples are used irrespective of the stopping set size $\teta$. The probability of failure using this strategy is $p_f(s,\teta) = 1 - \tau(\tS_{greedy},\teta)$ (see proof of Lemma \ref{lemma:average_pf}).
\end{strategy}

\vspace{-0.1cm}
We verify from simulations that by concentrating the distributions $\td^{g'}$ for $g_{\min} \leq g' < \ttg$, and using sampling strategy \ref{remark:samplingstrategy}, we get a failure probability $p_f(s,\teta)$ much lower than is achievable using the techniques in \cite{CMT} for short block lengths. 
Next, we provide our method to design LDPC codes that result in concentrated distributions $\td^{g'}$ for $g_{\min} \leq g' < \ttg$, for some upper bound cycle length $\ttg$. Choice of $\ttg$ is a complexity constraint that determines how many cycles we keep track of in the EC-PEG algorithm.

\vspace{-0.05cm}
\section{Entropy-constrained PEG algorithm}\label{sec:EC_PEG}
 Algorithm \ref{alg:EC-PEG} presents our EC-PEG algorithm
 for constructing a TG $\tGEC$ with $n$ VNs, $m$ CNs, and VN degree $d_v$  that concentrates the distributions $\td^{g'}$ for all \bluetext{$g'$-cycles} with $g'< \ttg$. 
 
\vspace{-0.1cm}
 \begin{algorithm}
\caption{EC-PEG Algorithm}\label{alg:EC-PEG}
\begin{algorithmic}[1]
\State \textbf{Inputs:} $n$, $m$ , $d_v$, $\ttg$  
\State \textbf{Initialize} $\tGEC$ to $n$ VNs, $m$ CNs and no edges
\State \textbf{Initialize} $\ttc^{g'}_i = 0$, for all $g'  < \ttg$ and $1\leq i \leq n$
\For{$j=1$ to $n$}
    \For{$k=1$ to $d_{v}$}
        \If{$k=1$}
        \State
    $\tcn^{s}$ = ``first CN selection" procedure
    \Else{ [$\tK , g', F$] = $PEG(\tGEC, v_j)$}
        \If{ $F = 0$ or ($F = 1$ and $g' \geq \ttg$)} 
        \State
    $\tcn^{s}$ = ``min degree CN selection”  procedure
        \Else\Comment{\textit{(\bluetext{$g'$-cycles}, $g' < \ttg$, are created)}}
        \State
            $\tcn^{s}$ = ``entropy-constrained CN selection” procedure
          \State 
          Update $\ttc^{g'}$, $\alpha^{g'}$ and $\talpha$  
        \EndIf
    \EndIf    
    \State  $\tGEC = \tGEC \cup \mathrm{edge}\{\tcn^{s},v_j\}$

    \EndFor
\EndFor
\State \textbf{Outputs:} $\tGEC$, $g_{\min}$
\end{algorithmic}
\end{algorithm}

 \vspace{-0.15cm}
  The PEG algorithm builds a TG in an edge by edge manner by iterating over the set of VNs and for each VN $v_j$, establishing $d_{v}$ edges to it.
  Similar to the original PEG algorithm proposed in \cite{PEG}, the ``first CN selection" procedure is just selecting a CN which has the minimum degree under the current form of the TG $\tGEC$. If multiple CNs exist with the same minimum degree, a CN is selected randomly.
  
  When establishing the $k^{th}$ edge to VN $v_j$, $k \geq 2$, the PEG
  
  \noindent
  algorithm encounters two situations: i) addition of the edge is possible without creating cycles; ii) addition of the edge creates cycles. In both the situations, the PEG algorithm finds a set of \emph{candidate} CNs that it proposes to connect $v_j$ to, in order to maximize the girth.
Without getting into the details of the PEG algorithm described in \cite{PEG}, we assume a procedure $PEG(\mathcal{G},v_j)$ that provides us with the set of candidate CNs $\tK$ for establishing a new edge to VN $v_j$ under the TG setting $\mathcal{G}$ according to the PEG algorithm described in \cite{PEG}. We also assume that the procedure returns flags $F = 0$  and $F = 1$ respectively for situations i) and ii) described above. For ii), it also returns the cycle length $g'$ of the smallest cycles formed when an edge is established between any CN in $\tK$ and $v_j$. 
For the case of $k \geq 2$, we call the procedure $PEG(\tGEC, v_j)$. When $F = 0$, the CN candidates in $\tK$ do not create any cycles and similar to the original PEG algorithm, a CN is selected from $\tK$ with the minimum degree under the current TG setting $\tGEC$. If multiple CNs exist with the minimum degree, a CN is picked randomly. This is the ``min degree CN selection" procedure. 

When $F = 1$, \bluetext{$g'$-cycles} are created when a new edge is established between any CN candidate in $\tK$ and $v_j$. While progressing through the EC-PEG algorithm, for all \bluetext{$g'$-cycles}, $g' < \ttg$, we maintain \bluetext{\textit{VN-$g'$-cycle}} counts $\ttc^{g'} = (\ttc^{g'}_1, \ttc^{g'}_2, \ldots, \ttc^{g'}_n)$, where $\ttc^{g'}_i$ is the number of \bluetext{$g'$-cycles} that are touched by VN $v_i$. We also maintain \bluetext{\textit{VN-$g'$-cycle}} normalized counts $\alpha^{g'} = (\alpha^{g'}_1, \alpha^{g'}_2, \ldots, \alpha^{g'}_n)$, where $\alpha^{g'}_i = \frac{\ttc^{g'}_i}{\sum_{i=1}^{n}\ttc^{g'}_i}$ if $\sum_{i=1}^{n}\ttc^{g'}_i \neq 0$, else $\alpha^{g'}_i = 0$, and joint normalized cycle counts $\talpha$ for \bluetext{$g'$-cycles, $g' < \ttg$, given by}  $\talpha = (\sum_{g'<\ttg}\frac{\alpha^{g'}_1}{T}, \sum_{g'<\ttg}\frac{\alpha^{g'}_2}{T}, \ldots, \sum_{g'<\ttg}\frac{\alpha^{g'}_n}{T})$ where $T = |\{4, 6, \ldots, \ttg-2\}|$. \lev{Once the CN $\tcn^{s}$ is selected, we update $\ttc^{g'}$, $\alpha^{g'}$, and $\talpha$ based on the new \bluetext{$g'$-cycles, $g' < \ttg$,} created by the edge between $\tcn^{s}$ and $v_j$  (Step 13 of Algorithm \ref{alg:EC-PEG})}.
Thus, for the case when $F = 1$, for each candidate CN $\tcn^{ca} \in \tK$, \bluetext{$g'$-cycles} are formed in the TG when an edge is established between $\tcn^{ca}$ and $v_j$. The cycles formed affect $\talpha$ differently for each $\tcn^{ca}$. In the ``entropy-constrained CN selection" procedure, we select the CN in $\tK$ that results in the minimum entropy $\tH(\talpha)$ for the resultant joint normalized cycle counts $\talpha$ after the addition of the edge. If there are multiple candidate CNs that result in the same minimum entropy, we select the CN with the least degree in the current TG setting and then randomly if multiple CNs have the least degree. Once we select the CN $\tcn^s$, we update the TG $\tGEC$ and also update $\ttc^{g'}$,  $\alpha^{g'}$ and $\talpha$ as mentioned before. If cycles formed are of length $g' \geq \ttg$, we again use the ``min degree CN selection" procedure. 
\greentext{The intuition behind the EC-PEG algorithm is that entropy is minimized when a distribution is concentrated, thus, selecting a CN that minimizes entropy leads to a much more concentrated distribution. Minimizing the entropy of the joint normalized cycle counts also ensures that the different cycle distributions are concentrated towards the same set of VNs.}

\vspace{-3pt}
\begin{remark}\label{remark:concentrationargument}
\greentext{When the $PEG()$ procedure in Algorithm \ref{alg:EC-PEG} returns $F = 1$ and cycle length $g' < \ttg$,} only \bluetext{the $g'$-cycles} are accounted for in the counts  $\ttc^{g'}$, $\alpha^{g'}$, and $\talpha$. It may so happen that along with \bluetext{the $g'$-cycles}, certain larger cycles of length $> g'$ are also formed when a new edge is added to the TG. In the EC-PEG algorithm, those cycles are not used for updating their respective cycle counts. Simulations show that it is sufficient to only consider the \bluetext{$g'$-cycles}, when we encounter $F=1$ with $g'$, in concentrating the final distributions $\td^{g'}$ (which includes all the \bluetext{$g'$-cycles}) as can be seen in Fig. \ref{fig:cycle_dist}. 
We call the cycles accounted for in the counts as ``considered" cycles. 
We conjecture that when considered \bluetext{$g'$-cycles} are formed in the EC-PEG algorithm, the non-considered cycles formed of length $> g'$ touch almost the same set of VNs as the considered \bluetext{$g'$-cycles}. \greentext{The EC-PEG algorithm concentrates all the considered $g'$-cycles for $g' < \ttg$ towards the same set of VNs by minimizing $\tH(\talpha$).
Thus, this procedure also concentrates the overall cycle distributions $\td^{g'}$ for $g' < \ttg$ towards the same set of VNs. 
It is however also possible to use the larger cycles to update the counts but this approach would incur higher complexity of finding all cycles of length $ < \ttg$ each time a new cycle is created.  }
\end{remark}

\vspace{-12.5pt}
\section{Simulation Results}\label{sec:sims}
\vspace{-2.5pt}
In this section, we present the performance of codes designed using the EC-PEG algorithm  and compare it with the performance of the original PEG algorithm \cite{PEG} when using the greedy sampling (GS) strategy described in Strategy \ref{remark:samplingstrategy}, and the performance achieved by \cite{CMT} using random LDPC codes and random sampling (RS).
For all codes designed using the EC-PEG algorithm, we choose $\ttg =10$, $d_v = 4$ and rate = $\frac{1}{2}$. 
Fig. \ref{fig:cycle_dist} shows the \bluetext{6-cycle} and \bluetext{8-cycle} distributions $\td^6$ and $\td^8$ generated using the EC-PEG and the original PEG algorithms for $n=100$ \deb{($g_{\min} = 6$ for both the codes)}.  The \lev{x-axis} in the figure are the VN indices $v_i$, arranged in the decreasing order of the \bluetext{6-cycle} fractions $\td^6_i$  and the \lev{y-axis} are the respective cycle fractions.
From Fig. \ref{fig:cycle_dist}, we can clearly see that the EC-PEG algorithm generates significantly concentrated distributions $\td^6$ and $\td^8$ compared to the original PEG algorithm and also the distributions are concentrated towards the same set of VNs. This is because the same VNs that have a high \bluetext{6-cycle} fraction also have a high \bluetext{8-cycle} fraction.
Fig. \ref{fig:S_g_avsN} shows the variation of the greedy 10-sample set $|S^a_{10}|$ for different codeword lengths. We see that the EC-PEG algorithm has a lower  $|S^a_{10}|$ than the original PEG algorithm which is a result of making the cycle distributions concentrated towards the same set of VNs. Since $\Sigma(4,10) \geq 17$, from Lemma \ref{lemma:zeroprob}, for all stopping sets of weight $\teta < 17$, $P_f(S^a_{10},\teta)$ is zero. Additionally, the green plot in Fig. \ref{fig:S_g_avsN} is the minimum number of samples $s$ required to have a $p_f(s,16) \leq 10^{-6}$ using random sampling as in \cite{CMT}. We see that to have $p_f(s,16) \leq 10^{-6}$, a large number of samples are needed compared to $|S^a_{10}|$.

Fig. \ref{fig:SS_dist} shows the stopping set distributions $ss^{\teta}$ for LDPC codes designed using the original PEG and the EC-PEG algorithms for $n=100$. The \lev{x-axis} in all the plots are the VN indices $v_i$, arranged in the decreasing order of the \bluetext{6-cycle} fractions $\td^6_i$. Clearly, for the EC-PEG algorithm the stopping set distributions are concentrated towards the same set of VNs as that of the \bluetext{6-cycle} distributions since the VNs towards the right (left) on the \lev{x-axis} have low (high) stopping set fraction.
This motivates the sampling choice of $\tS_{greedy}$ to minimize $p_f(s,\teta)$ according to Strategy \ref{remark:samplingstrategy}. 
\greentext{We note that a similar trend does not hold for the original PEG algorithm.} 
\begin{figure}[t]
    \centering
    \begin{subfigure}{0.5\linewidth}
\begin{minipage}{0.99\linewidth}
\begin{tikzpicture}
  \node (img) {\includegraphics[scale=0.15]{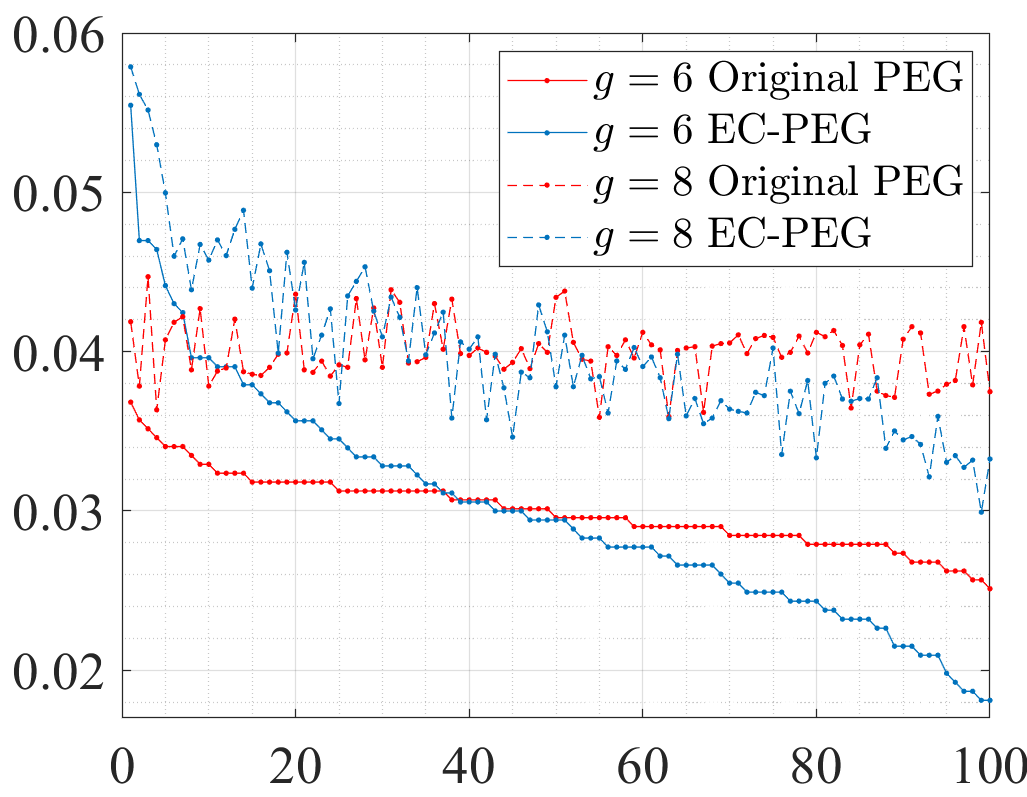}};
  \node[below=of img, node distance=0cm, yshift=1.1cm,font=\color{black}] {\footnotesize{VN index}};
  \node[left=of img, node distance=0cm, rotate=90, anchor=center,yshift=-1cm,font=\color{black}] {\footnotesize{$\td^g$}};
 \end{tikzpicture}
 \end{minipage}
 \vspace{-3pt}
\caption{}
\label{fig:cycle_dist}
    \end{subfigure}%
    \begin{subfigure}{0.5\linewidth}
\begin{minipage}{0.99\linewidth}
\begin{tikzpicture}
  \node (img) {\includegraphics[scale=0.15]{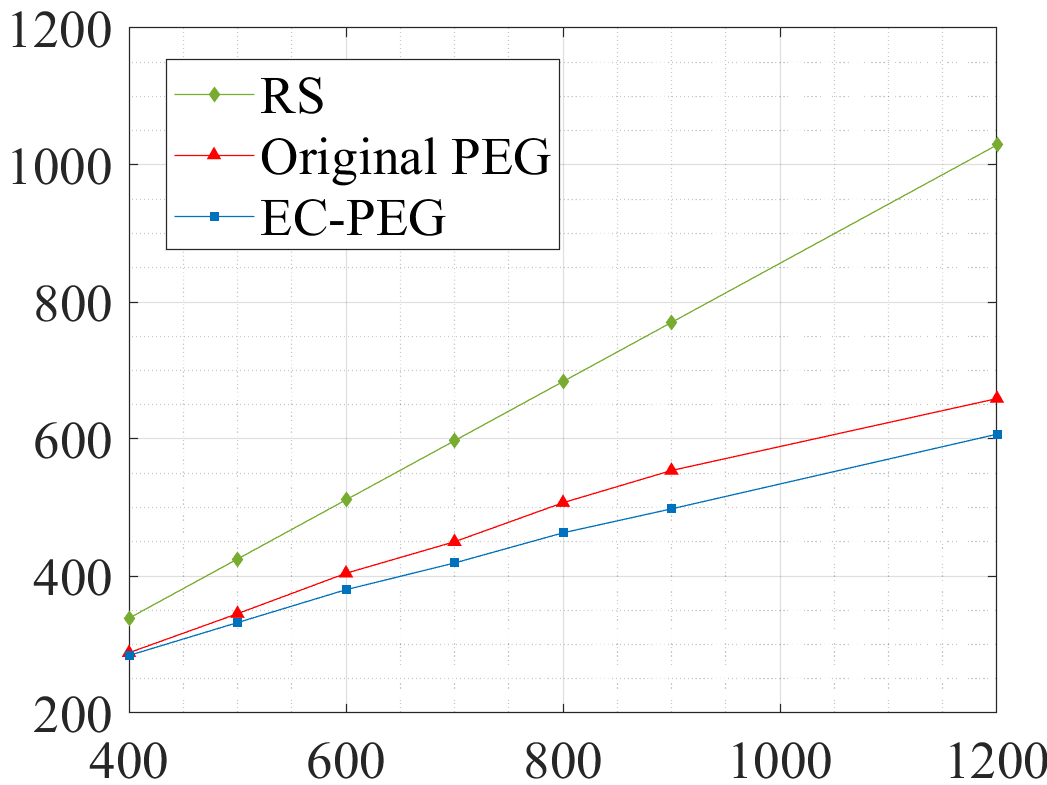}};
  \node[below=of img, node distance=0cm, yshift=1.1cm,font=\color{black}] {\footnotesize{No. of VNs $n$}};
  \node[left=of img, node distance=0cm, rotate=90, anchor=center,yshift=-1cm,font=\color{black}] {\footnotesize{$|\tS^{a}_{10}|$}};
 \end{tikzpicture}
 \end{minipage}
 \vspace{-3pt}
\caption{}
\label{fig:S_g_avsN}
    \end{subfigure}
     \vspace{-5pt}
    \caption{Results for LDPC codes with rate = $\frac{1}{2}$, $d_v = 4$; (a) Cycle distribution using different PEG algorithms for $n=100$, (b) $|\tS^{a}_{10}|$ vs. the number of  VNs.}
    \label{}
     \vspace{-12pt}
\end{figure}

\begin{figure}[t]
    \centering
    \begin{subfigure}{0.5\linewidth}
\begin{minipage}{0.99\linewidth}
\begin{tikzpicture}
  \node (img) {\includegraphics[scale=0.15]{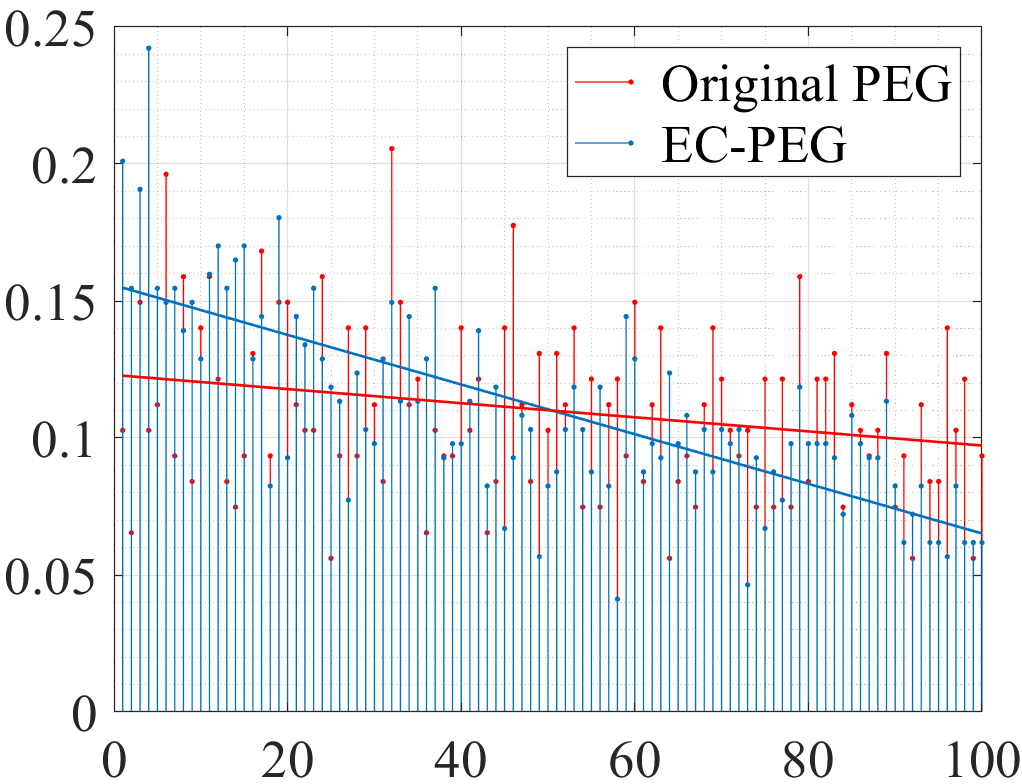}};
  \node[below=of img, node distance=0cm, yshift=1.1cm,font=\color{black}] {VN index};
  \node[left=of img, node distance=0cm, rotate=90, anchor=center,yshift=-0.95cm,font=\color{black}] {$ss^{11}$};
 \end{tikzpicture}
 \end{minipage}
    \end{subfigure}%
\begin{subfigure}{0.5\linewidth}
\begin{minipage}{0.99\linewidth}
\begin{tikzpicture}
  \node (img) {\includegraphics[scale=0.15]{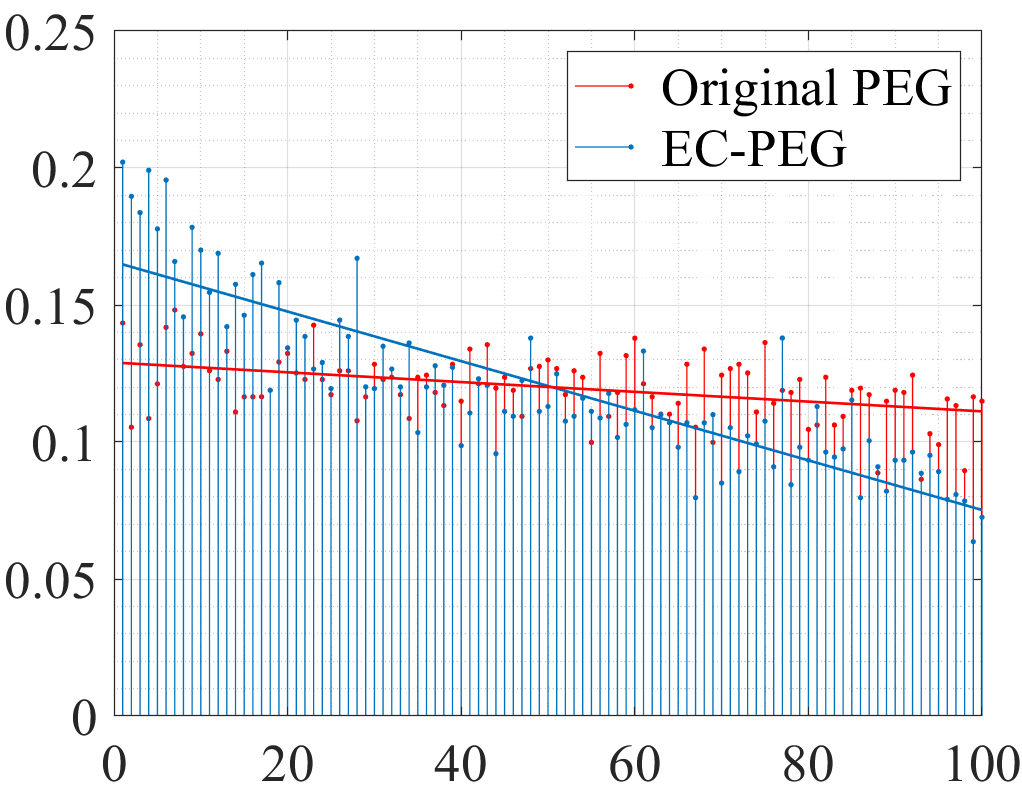}};
  \node[below=of img, node distance=0cm, yshift=1.1cm,font=\color{black}] {VN index};
  \node[left=of img, node distance=0cm, rotate=90, anchor=center,yshift=-0.95cm,font=\color{black}] {$ss^{12}$};
 \end{tikzpicture}
 \end{minipage}
\end{subfigure}
     \vspace{-4pt}
    \caption{Stopping Set distributions $ss^{\teta}$ for LDPC code with $n=100$, rate = $\frac{1}{2}$, $d_v = 4$; left: $\teta = 11$; right: $\teta = 12$. The lines are the best fit slopes for $ss^{\teta}$  indicating the graph slope.}
    \label{fig:SS_dist}
     \vspace{-14pt}
\end{figure}

\begin{figure}[t]
    \centering
\begin{minipage}{0.99\linewidth}
\begin{tikzpicture}
  \node (img)
  {\includegraphics[scale=0.27]{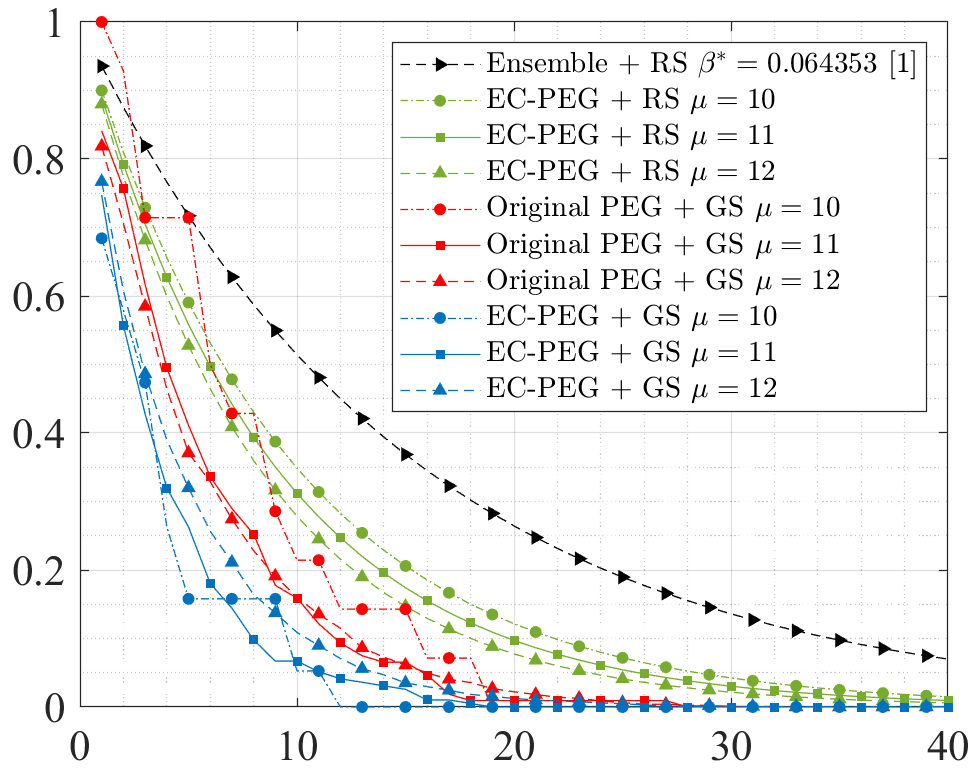}};
  \node[below=of img, node distance=0cm, yshift=1.1cm,font=\color{black}] {$s$};
  \node[left=of img, node distance=0cm, rotate=90, anchor=center,yshift=-0.9cm,font=\color{black}] {$p_f(s,\teta)$};
 \end{tikzpicture}
 \end{minipage}
 \vspace{-4pt}
\caption{$p_f(s,\teta)$ for various coding schemes and sampling strategies for $n=100$ and rate $= \frac{1}{2}$.}
\label{fig:P_f_s}
     \vspace{-12pt}
\end{figure}

\lev{In Fig. \ref{fig:P_f_s}, the probability of failure $p_f(s,\teta)$ is shown for $\teta = 10, 11$, and $12$ for varying codes with $n = 100$. The top black line is the $p_f(s,\teta)$ achieved using random sampling and a random LDPC code as in \cite{CMT} with stopping ratio $\beta^* = 0.064353$. The value $\beta^*$ is the best stopping ratio obtained  for a rate $\frac{1}{2}$ code following the method in \cite[section 5.3]{CMT} using parameters $(c,d) = (8,16)$. The green lines represent $p_f(s,\teta)$ obtained using random sampling for different $\teta$ calculated as $p_f(s,\teta) = (1 - \frac{\teta}{n})^s$ \bluetext{(when the malicious node randomly hides a stopping set of size $\teta$, $p_f(s,\teta)$ is still provided by this expression)}. The red and blue curves demonstrate the use of the greedy sampling strategy for codes designed by the original PEG and the EC-PEG algorithms, respectively, and $p_f(s,\teta)$ is calculated according to the equation provided in Strategy \ref{remark:samplingstrategy}. Note that the codes designed by the EC-PEG and the original PEG algorithms  have a minimum stopping set size of $9$, but $p_f(s,9) = 0, s \geq 3$ so we ignore these stopping sets. \redtext{Also, the maximum CN degree $d$ for the EC-PEG LDPC code is $d = 11$. Thus the incorrect coding proof size, which depends on $d$, is smaller than for the random construction with maximum CN degree $16$.}  Fig. \ref{fig:P_f_s} demonstrates the 3 major improvements of our design. 
The first benefit is the use of deterministic constructions to design finite-length codes that provide better guarantees for the stopping set ratio than random ensembles, as can be seen when comparing the black and green curves. 
The second benefit comes from using a greedy sampling strategy as opposed to random sampling, which can be observed by the significant reduction in $p_f(s,\teta)$ between the green and red curves. Finally, $p_f(s,\teta)$ is reduced further when the greedy sampling strategy is used on a concentrated LDPC code, as shown by comparing the red and blue curves. All of these benefits combine to significantly reduce $p_f(s,\teta)$ for a fixed sample size, demonstrating the efficacy of our techniques.}

\vspace{-3pt}
\section{Conclusion}\label{sec:conclusion}
\vspace{-2pt}
\redtext{ In this paper, we proposed a modification to the PEG algorithm to design LDPC codes that have a concentrated stopping set distribution.
We showed that our novel LDPC code design coupled with a greedy sampling strategy results in a much lower probability of failure compared to previous schemes. Our deterministic code construction allows for flexibility in code parameters including block length while providing good performance for this application.
Ongoing research is focused on stronger adversary models where malicious nodes are aware of the greedy sampling strategy. Current research is showing that it is possible to use concentrated LDPC codes with a different sampling strategy to account for such strong adversaries and is the subject of future work. 

}

\vspace{-0.1cm}
\section*{Acknowledgment}
\vspace{-0.1cm}
The authors acknowledge the Guru Krupa Foundation and NSF-BSF grant no. 2008728 to conduct this research work.

\vspace{-4pt}

\newpage
 \appendices

\section{\bluetext{Algorithm for forming $g$-Sample sets}}\label{appendix:alg:g-sample}
\vspace{-0.2cm} 
\begin{algorithm}
\caption{Forming $g$-sample sets $S^a_g$, $g_{\min} < g \leq g_c$}\label{alg:g-sample}
\begin{algorithmic}[1]
\State \textbf{Inputs:} TG $\mathcal{G}$, $\ttg$, $g_{\min}$
\For{$g=g_{\min}$ to $\ttg - 2$, $g =$ even}
\State $\widehat{\mathcal{G}} = \mathcal{G}$
\If{$g=g_{\min}$} $\tC_{g} = \emptyset$
\Else{ $\tC_{g} = \cup_{g_{\min} \leq g' < g}\; \tC_{g'}$}
\EndIf
\State Purge all VNs in $\tC_{g}$ from $\widehat{\mathcal{G}}$
\While{$\widehat{\mathcal{G}}$ has $g$-cycles}
\State $\mathcal{V}_s$ = set of VNs that touch maximum number of\\\hspace*{1.2cm} $g$-cycles in $\widehat{\mathcal{G}}$
\State $v$ = VN selected from $\mathcal{V}_s$ uniformly at random
\State $\tC_{g} = \tC_{g} \cup \{v\}$
\State Purge $v$ from $\widehat{\mathcal{G}}$ 
\EndWhile
\EndFor
\State $S^a_g = \tC_{g-2}$, $g_{\min} < g \leq \ttg $
\State \textbf{Outputs:} $S^a_g$,  $g_{\min} < g \leq \ttg $
\State Auxiliary Outputs: $\tC_g$, $g_{\min} \leq g < \ttg $
\end{algorithmic}
\end{algorithm}
 
  \section{Proof of Lemma \ref{lemma:average_pf}}
  \label{appendix:lemma:average_pf}

Let $A(s) = \{A_1, A_2, \ldots, A_{|A(s)|}\}$ be the set of all possible subsets of size $s$ of the set of VNs $\{v_1, v_2, \ldots, v_n\}$. Also, let $\mathcal{SS}_{\teta} = \{S_1, S_2, \ldots, S_{|\mathcal{SS}_{\teta}|}\}$, where $S_i$ is a stopping set of weight $\teta$. Define the one-zero adjacency matrix $L$ of size $|\mathcal{SS}_{\teta}|$ $\times$ $|A(s)|$, where $L_{ij} = 1$ if and only if none of the VNs in $A_j$ touch $S_i$, else $L_{ij} = 0$. Let a general sampling strategy be denoted by $\mathrm{x} = \{x_1, x_2, \ldots, x_{|A(s)|}\}$, where
$x_j = \mathrm{Prob}$(VNs in set $A_j$ are sampled). Thus, if the malicious node picks a stopping set from $\mathcal{SS}_{\teta}$ randomly, the probability of failure to catch the malicious node is $\frac{1}{|\mathcal{SS}_{\teta}|}||L\mathrm{x}||_1$, where $||a||_1$ is the one norm of the vector $a$. The probability of failure $p_f(s,\teta)$ will be lower bounded by the optimal solution of the following optimization problem:

\begin{equation*}
\begin{aligned}
& \underset{\mathrm{x}}{\text{minimize}}
& & \frac{1}{|\mathcal{SS}_{\teta}|}||L\mathrm{x}||_1 \\
& \text{subject to}
& & 1^{T}x = 1,\\
& & & x \geq 0.
\end{aligned}
\end{equation*}

Let $k^* = \mathrm{argmin}_j||l_{j}||_1$, where $l_j$ is the $j^{th}$ column of $L$.  It is easy to see that the optimal solution of the above problem is $x^*_i = 1, i = k^*, x^*_i = 0, i\neq k^*$.

Recall that $\tau(\tS,\teta)$ is the fraction of stopping sets of weight $\teta$ touched by the sample set $\tS$. For any sampling strategy (with $s$ samples) $x_i = 1, i = k \in \{1, 2, \ldots, |A(s)|\}, x_i = 0, i\neq k$, i.e., sampling with probability one, the set $A_k$, $p_f(s,\teta) = \frac{1}{|\mathcal{SS}_{\teta}|}||l_{k}||_1 = 1 - \tau(A_k,\teta)$.
Thus, the optimal sampling strategy is always sampling a subset of VNs of size $s$ that touch the most number of stopping sets in $\mathcal{SS}_{\teta}$. The optimal solution becomes, $\frac{1}{|\mathcal{SS}_{\teta}|}||l_{k^*}||_1 = 1 - \tau(A_{k^*},\teta)$, which is the fraction of stopping sets not touched by VNs in $A_{k^*}$. This is exactly equal to $1 - \max_{\tS \subseteq \tV, |\tS| = s}\tau(\tS,\teta)$. Thus,

$$p_f(s,\teta) \geq \frac{1}{|\mathcal{SS}_{\teta}|}||l_{k^*}||_1 = 1 - \max_{\tS \subseteq \tV, |\tS| = s}\tau(\tS,\teta).$$

This lower bound is also achieved using the sampling strategy of $x^*_i = 1, i = k^*, x^*_i = 0, i\neq k^*$, i.e., sampling a subset of VNs of size $s$ that touch the most number of stopping sets in $\mathcal{SS}_{\teta}$.

\end{document}